\documentclass[10pt,twocolumn,twoside]{IEEEtran}
\IEEEoverridecommandlockouts

\usepackage{cite}
\usepackage{amsmath,amssymb,amsfonts}
\usepackage{graphicx}
\usepackage{textcomp}
\usepackage{balance}
\usepackage{xcolor}
\usepackage[utf8]{inputenc}

\usepackage{booktabs}
\usepackage{amsthm}
\newtheorem{theorem}{Theorem}[section]
\newtheorem{lemma}[theorem]{Lemma}

\usepackage{multirow}
\usepackage{makecell}
\usepackage{float}
\usepackage[markup=startstop]{changes}
\usepackage{threeparttable}

\usepackage{algorithm}
\usepackage{algpseudocode}

\usepackage{tabularx}
\usepackage{array}
\usepackage{booktabs}
\usepackage{makecell}

\makeatletter
\long\def\@makefntext#1{%
  \parindent 1em%
  \noindent\hbox{\@makefnmark}\hspace{0.5em}#1}
\makeatother

\let\oldthebibliography\thebibliography
\renewcommand{\thebibliography}[1]{%
  \oldthebibliography{#1}%
  \setlength{\itemsep}{0pt}%
  \setlength{\parsep}{0pt}%
}

\def\BibTeX{{\rm B\kern-.05em{\sc i\kern-.025em b}\kern-.08em
    T\kern-.1667em\lower.7ex\hbox{E}\kern-.125emX}}
\usepackage{lipsum}

\title{\LARGE \bf
Dynamic Storage Operation Under Uncertainty and the Reliability Externality:
Implications for Capacity Investments
}

\author{Daniel Shen$^\dagger$, Marija Ilic$^\dagger$, John Parsons$^\ddagger$
\vspace{-0.5cm}
\thanks{
    $^\dagger$ Department of Electrical
    Engineering and Computer Science, Massachusetts Institute of Technology,
    Cambridge, MA, USA. Email: {\tt
(oski, ilic)@mit.edu}}
\thanks{$^\ddagger$
Center for Energy and Environmental Policy Research, MIT, Cambridge, MA, USA.
Email: {\tt jparsons@mit.edu}}

\thanks{
This material is based upon work supported by the MIT Energy Initiative's
Future Energy Systems Center.}}

\begin{document}
\begingroup
\allowdisplaybreaks

\maketitle

\begin{abstract}
Energy storage is increasingly relied upon to meet short-term demand
uncertainties from renewable variability and electrification. Unlike
conventional generators, storage's contribution to reliability is
policy-dependent and balances near-term arbitrage against future scarcity risk.
We study how demand uncertainty alters such dynamic storage operation
and how these operating decisions propagate into long-run investment outcomes.
We formulate storage operation as an average-cost Markov decision process and
embed the resulting stationary policies into a stylized capacity expansion
framework. Demand uncertainty induces a precautionary storage policy which
hedges against stochastic scarcity, leading to materially different
post-storage demand distributions relative to perfect-foresight benchmarks. We
additionally demonstrate that the reliability externality characteristic of
electricity markets interacts with uncertainty in a manner that uniquely
distorts both storage operation and investment.
\end{abstract}

\begin{IEEEkeywords}
    energy storage, Markov processes, power system economics,
    power system planning, power system reliability
\end{IEEEkeywords}

\section{Introduction} \label{sec:intro}

Energy storage systems, and in particular grid-scale battery energy storage,
are playing an increasing role in resource adequacy. Unlike conventional
generators, the ability of storage to meet demand is dependent on the amount of
stored energy available at that moment: effective operation depends on
appropriately scheduling charge and discharge decisions in anticipation of
future demand conditions. Planning a least-cost resource mix with storage as a
central technology requires jointly accounting for the subsequent operational
decisions of storage resources.

This coupling is increasingly relevant as short-term demand uncertainty grows
with higher penetrations of variable renewable generation and the
electrification of end uses. Under uncertainty, storage operation takes on a
dynamic,\footnote{Throughout this paper, ``dynamic'' refers to operational
optimization under short-term uncertainty in the energy market context,
not sub-hourly fast frequency reserves or dynamic stability.}
precautionary character whereby energy is reserved for a
\textit{potential}
period of high-demand (or high-price) events in the future~
\cite{durmazPrecautionaryStorageElectricity2016}.
The expansion in installed battery capacity over the past decade has
subsequently spurred a rich literature studying such dynamic scheduling of
battery storage under uncertainties. Such approaches as in
\cite{megelStochasticDualDynamic2015,
jiangOptimalHourAheadBidding2015, xuOperationalValuationEnergy2020,
zhengArbitragingVariableEfficiency2022, balakinDynamicTradingStrategies2025,
guerraRobustScalableDispatch2025}
adopt a stochastic
programming perspective on operation,
while a closely related strand of literature links such dynamic storage
operation to long-term investment decisions to encompass the feedback between
these two problems \cite{geskeOptimalStorageInvestment2020,
pengRenewableFlexibleStorage2024, holeCapacityPlanningRenewable2025,
schmidtLongdurationStorageWeather2025,
biggarTheoryStoragePower2026}.

Much of the capacity expansion literature, however,
is formulated from a system-planning
perspective and does not analyze the reliability externality
that besets investment incentives in liberalized electricity markets
\cite{franka.wolakLongTermResourceAdequacy2021}.
Grid reliability is a non-excludable public good
and short-run prices are often capped to prevent the exercise of market power.
Thus, energy prices do not fully reflect the value of installed capacity and
revenues earned through energy markets may be insufficient to support adequate
investment~\cite{joskowCapacityPaymentsImperfect2008, suskiMissingMoneyMarketBased2025}.%
\footnote{Regulatory price caps are the most prominent contributor to
this phenomenon. Even in the absence of explicit caps, however, prices may fail
to reach levels commensurate with the value of lost load since there is a lack
of individualized penalties for failing to procure capacity for reliability.}
Many electricity markets address this externality
through uniform payments based on pre-calculated measures of firm
capacity.
\footnote{Four U.S. markets (PJM, ISO-NE, NYISO, and MISO) utilize
capacity markets to pay suppliers for commitments to meet future demand
~\cite{officeofenergypolicyandinnovationEnergyPrimerHandbook2023}.}
However, such capacity payment mechanisms were primarily designed
around thermal generators and may insufficiently compensate resources with
state-contingent supply such as storage and renewables
\cite{franka.wolakLongTermResourceAdequacy2021}. While availability
obligations such as PJM's Capacity Performance Mechanism
can mitigate these shortcomings, they are also infrequently declared and
subjective in their assessment of penalties
\cite{waltergrafPJMsCapacityPerformance2025}.

In this work, we adopt a stochastic control framework to study how demand
uncertainty a) shapes the operation and investment of energy storage in
electricity systems, and b) interacts with the reliability
externality. We model storage operation as an infinite-horizon average-cost
Markov decision process and embed the resulting stationary operating policies
within a capacity expansion model.
This formulation allows us to propagate
short-run operational uncertainty into long-run equilibrium
investment decisions. In particular, our work addresses the following
questions:

\begin{enumerate}
    \item \textit{How does short-term demand uncertainty alter the operating
    policy of energy storage, relative to when the demand is known with perfect
    foresight?}
    \item \textit{How do uncertainty-driven storage operating policies map into
    long-run equilibrium investment decisions for storage and conventional
    generation capacity?}
    \item \textit{How does the reliability externality in electricity markets
    distort storage operation and capacity investment under uncertainty, and
    why are conventional capacity mechanisms insufficient to correct these
    distortions?}
\end{enumerate}

To this end, we develop a stylized capacity expansion model that embeds a
dynamic storage operating model. Although our model is deliberately kept
simple, it retains the structure needed to capture interactions between
uncertainty, storage operation, and investment incentives. In particular, our
contributions are as follows:

\begin{itemize}
    \item We characterize how demand uncertainty induces a precautionary
    storage policy distinct from perfect-foresight
    operation. This ``precautionary storage'' reserves energy to hedge against
    stochastic scarcity events. We then embed the precautionary policy into a
    capacity expansion model to compare equilibrium investment outcomes with
    a setup with perfect foresight.
    \item We demonstrate the reliability externality distorts the operating
    policies in a manner that qualitatively differs from conventional
    generators due to the dynamic nature of storage operation. We additionally
    model how these operational distortions propagate to long-run planning
    outcomes. Although our results do not find a significant impact on storage
    investment, we posit the externality should still lead to suboptimal
    storage capacity that cannot be corrected by standard capacity payments.
\end{itemize}

The rest of the paper is organized as follows. Section \ref{sec:model}
introduces our operation and investment model. Section
\ref{sec:operation_and_investment} illustrates the operating and investment
characteristics of storage under uncertain net demand. Section
\ref{sec:missing_money} analyzes how the reliability externality in electricity
markets uniquely distorts outcomes for storage. Finally, Section
\ref{sec:conclusion} concludes.

\section{Electricity System Model} \label{sec:model}

\newcommand{\capacities}{\ensuremath{\mathbf{x}}}
\newcommand{\operations}{\ensuremath{\mathbf{y}}}

We utilize a mean-reverting model of inelastic electricity demand combined with
a stylized capacity expansion model that embeds a storage operating model.
Storage operation is based on a Markov decision process (MDP) with
probabilistic foresight of future net demand conditions. While our model omits
consideration of transmission constraints, ramp rates, and co-optimization of
investments in variable renewable energy resources, we are primarily concerned
with the basic interaction between storage operation and net demand uncertainty
and so opt for this simplified framework.

\subsection{Demand Model} \label{sec:model-ou}

Let the net demand $\ell$ be a Markov chain where the transition
probabilities $P_{\mathcal{L}}(\ell' \mid \ell)$ are calculated using an
Ornstein-Uhlenbeck (OU) process. This treatment allows us to both capture the
mean-reverting nature of net electricity demand and represent uncertainty in
future demand.
The OU process is a continuous-time stochastic
process that exhibits mean-reverting behavior. The
process is described by the stochastic differential equation:

\begin{equation} \label{eq:ornstein-uhlenbeck}
    dx_t = \theta (\mu - x_t) dt + \sigma dW_t,
\end{equation}
where $x$ is the state variable (in our case, the net demand $\ell$),
$\theta$ governs the drift speed of mean reversion toward the long-term mean
$\mu$, $\sigma$ is the volatility parameter, and $W_t$ is a standard Wiener
process.

Given a time interval $\Delta t$ and a starting state $x_t$, the conditional distribution of
$x_{t+\Delta t}$ is normally distributed with mean and variance:
\begin{align}
    \mathbb{E}[x_{t+\Delta t} \mid x_t] & = x_t e^{-\theta \Delta t} + \mu \left(1 - e^{-\theta \Delta t}\right) \\
    \text{Var}[x_{t+\Delta t} \mid x_t] & = \frac{\sigma^2}{2\theta} \left(1 - e^{-2\theta \Delta t}\right).
\end{align}

Additionally, the long-term stationary distribution of the OU process is Gaussian
with long-term mean $\mu$ and variance $\frac{\sigma^2}{2\theta}$.

For our discrete-time MDP framework, we discretize the demand into
equally-spaced levels and calculate transition probabilities between levels by
integrating the conditional distributions of~\eqref{eq:ornstein-uhlenbeck}.

\subsection{Dynamic Storage Operating Model} \label{sec:operating-model}

Uncertainty of future net demand motivates
reserving stored energy to hedge against potential demand states.
Our broader analysis concerns the long-run equilibrium of operating and
investment decisions; hence, we formulate the storage operating problem as an
average-cost MDP where the system operator optimizes the dispatch to minimize
expected costs over an infinite horizon.

The system has conventional generators $g \in G$, each with power
capacity $K_g$ and constant variable cost $c_g$. There is a single storage
technology with power capacity $K_s$, stored energy capacity $E$, and round-trip
efficiency $\eta$. If insufficient supply is available to fully meet
net load, some load can be unserved / shed at the cost of the value of lost load
$c_{shed}$. Finally, when the net demand is negative
(\textit{i.e.} excess renewable generation), surplus energy can be curtailed at no cost.

In each interval the system operator determines the dispatch based on the
system state $S \in \mathcal{S} = \mathcal{E} \times \mathcal{L}$. This state
is comprised of $e$, the stored energy available for use over the interval, and
$\ell$, the demand to be met during the interval. Based on the state, the
operator subsequently chooses the mix of generation and storage
dispatch which minimizes total costs in expectation. Let the dispatch
decision affecting the interval be represented by the action $a = (p_{s}^{ch},
p_{s}^{dis}, p_{g}, p_{shed}, p_{curt})$. This vector captures the charge and
discharge rates of storage, output of conventional
generators, unserved energy, and curtailed surplus, respectively.
\footnote{The injections $p_s^{(ch,dis)}$ capture
the rate of change of stored energy; the charging load seen by the grid is
adjusted by the efficiency factor $\eta$.}

Given fixed storage and generation capacities, the set of feasible actions
$\mathcal{A}(S)$ includes all actions that satisfy the operating constraints
given the current state. Specifically, $a \in \mathcal{A}(S)$ if $a$ satisfies:

\begin{subequations} \label{eq:operational_constraints}
\begin{align}
& \sum_{g} p_{g} + p_{s}^{dis} + p_{shed} = \ell + \frac{1}{\eta} p_{s}^{ch} + p_{curt} \label{eq:gencost} \\
& 0 \leq p_{g} \leq K_g \quad \forall g \label{eq:gencap} \\
& 0 \leq p_{shed} \label{eq:shed} \\
& 0 \leq p_{curt} \label{eq:curt} \\
& 0 \leq \frac{1}{\eta} p_{s}^{ch} \leq K_s \label{eq:storage-ch} \\
& 0 \leq p_{s}^{dis} \leq K_s \label{eq:storage-dis} \\
& p_{s}^{ch} p_{s}^{dis} = 0 \label{eq:nonsimultaneous} \\
& e + p_{s}^{ch} - p_{s}^{dis} = e' \label{eq:storage-energy-trajectory} \\
& 0 \leq e' \leq E. \label{eq:soc-cap}
\end{align}
\end{subequations}
Where~\eqref{eq:gencost} is the power balance constraint,~\eqref{eq:gencap}
enforces generation capacity limits,~\eqref{eq:shed} enforces non-negativity of
load shedding,~\eqref{eq:curt} enforces non-negativity of curtailment,
\eqref{eq:storage-ch} and~\eqref{eq:storage-dis} enforce grid-side storage
power capacity limits,~\eqref{eq:nonsimultaneous} prevents simultaneous
charging and discharging,
\eqref{eq:storage-energy-trajectory} describes
the evolution of stored energy from state $e$ to $e'$ as a result of charging
and discharging, and~\eqref{eq:soc-cap} enforces stored energy capacity
limits.\footnote{
    Constraints~\eqref{eq:gencost}--\eqref{eq:soc-cap}
    are practically modeled by removing points
    from the discrete action set instead of encoding them as constraints in
    a solver, thus avoiding the nonlinearity of \eqref{eq:nonsimultaneous}.
}

Let $C(\ell, a)$ represent the system operating costs, which include both
generation costs and load shedding costs:

\begin{equation}
    C(\ell, a) = c_{shed} p_{shed}  +
    \sum_{g \in \mathcal{G}} c_{g} p_{g}.
\end{equation}

This infinite-horizon operational problem is commonly cast as a Bellman
equation and solved with dynamic programming techniques. However, since our
analysis is concerned with the long-run impact of storage operation on capacity
investments \textit{in equilibrium}, we instead opt to formulate the operating
problem using the ``average cost'' criteria,
\cite{arapostathisDiscreteTimeControlledMarkov1993} where the objective is
to minimize the long-run expected average cost per time step. Specifically,
given a fixed operating policy $\pi: \mathcal{S} \to \mathcal{A}$, the long-run
average cost under the induced stationary distribution is given by:

\begin{equation}
    J(\pi) =
    \limsup_{T \to \infty} \frac{1}{T}
    \mathbb{E}_\pi\!\left[
    \sum_{t=0}^{T-1} C(\ell_t, a_t)
    \right].
\end{equation}

We next establish the existence of a stationary distribution for the state
process induced by the storage and load model:

\begin{lemma}[Existence of a stationary distribution]
\label{lem:stationary_distribution}
Under any stationary policy $\pi$, the controlled state process
$S=(e,\ell)$ induced by the stored energy constraint
\textit{\eqref{eq:storage-energy-trajectory}} admits a stationary distribution
$\rho$ which is unique on the recurrent class induced by $\pi$.
\end{lemma}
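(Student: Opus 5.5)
The plan is to reduce the statement to the standard theory of finite-state Markov chains. First, I will make explicit the finiteness of the state space. The demand is discretized into finitely many equally spaced levels, so $\mathcal{L}$ is finite. The stored energy $e$ takes values in $[0,E]$, and because the action set is discrete (see the footnote on removing points from the discrete action set), the reachable energy levels form a finite grid $\mathcal{E}$. Hence $\mathcal{S}=\mathcal{E}\times\mathcal{L}$ is finite.

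Second, I will write down the transition kernel induced by a fixed stationary policy $\pi$. Given $S=(e,\ell)$, the action $a=\pi(S)$ is determined. By \eqref{eq:storage-energy-trajectory}, the next energy level $e'=e+p_s^{ch}-p_s^{dis}$ is then a deterministic function of $(e,\ell)$, and \eqref{eq:soc-cap} guarantees $e'\in\mathcal{E}$. The demand evolves exogenously according to $P_{\mathcal{L}}(\ell'\mid\ell)$, independently of the action. So
\begin{equation*}
P_\pi\big((e',\ell')\mid(e,\ell)\big)=\mathbf{1}\{e'=f_\pi(e,\ell)\}\,P_{\mathcal{L}}(\ell'\mid\ell),
\end{equation*}
where $f_\pi(e,\ell)$ denotes the energy level produced by $\pi(e,\ell)$. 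The row sums are one, so $P_\pi$ is a stochastic matrix on a finite set. This makes $S_t$ a time-homogeneous Markov chain.

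Third, I will invoke the classical result that every finite Markov chain has at least one recurrent class. Each closed recurrent class $R$ carries a unique invariant distribution, which is strictly positive on $R$; this follows from Perron--Frobenius applied to the irreducible restriction $P_\pi|_R$, or from positive recurrence of finite irreducible chains. Extending this distribution by zero outside $R$ gives a stationary distribution $\rho$ for the whole chain, which proves existence. Uniqueness on $R$ follows because any stationary distribution supported on $R$ must solve $\rho=\rho P_\pi|_R$, and that equation has a one-dimensional solution space for an irreducible finite chain.

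The main obstacle is justifying the finiteness and closedness of $\mathcal{E}$. If $\eta<1$, the energy increments $p_s^{ch}$ must still land on the grid, so I will assume, as the paper's implementation implies, that actions are restricted so that $e'$ lies on a fixed finite grid in $[0,E]$. I also need to state clearly that uniqueness holds only per recurrent class. If $\pi$ induces several closed classes, for example different energy levels that are never left, then stationary distributions are convex combinations of the per-class ones. This is why the lemma asserts uniqueness only on the recurrent class induced by $\pi$. The OU discretization gives $P_{\mathcal{L}}(\ell'\mid\ell)>0$ for all pairs, so the demand component is irreducible and aperiodic. Any non-uniqueness therefore comes only from the energy dynamics, and I will remark on this rather than rule it out.
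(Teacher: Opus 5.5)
Your proposal is correct and follows essentially the same route as the paper. Both arguments show that $\mathcal S=\mathcal E\times\mathcal L$ is finite, note that a fixed stationary $\pi$ makes $S=(e,\ell)$ a time-homogeneous chain driven by the deterministic update \eqref{eq:storage-energy-trajectory} and $P_{\mathcal L}$, and then apply standard finite-chain theory on a closed recurrent class. Your version is slightly more careful than the paper's, which calls the chain ``ergodic'' on \emph{the} recurrent class by appeal to the aperiodicity of $P_{\mathcal L}$: you avoid needing aperiodicity, which need not carry over to the joint chain (a policy alternating between two energy levels gives period 2), and you state that uniqueness holds per closed class when $\pi$ induces several, as under an always-idle policy.
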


\begin{proof}
The exogenous net-load process $\{\ell\}$ is obtained by discretizing a
mean-reverting Ornstein--Uhlenbeck process on a finite grid; consequently, the
net-load transition matrix $P_{\mathcal L}(\ell' \mid \ell)$ is irreducible and
aperiodic. Under a stationary policy $\pi$, the joint process $S=(e,\ell)$ is a
time-homogeneous Markov chain on the finite state space $\mathcal S=\mathcal
E\times\mathcal L$, with transitions determined by the storage state update in
\eqref{eq:storage-energy-trajectory} and $P_{\mathcal L}$. The policy $\pi$
may restrict the set of stored energy levels which are visited with positive
probability. \footnote{For example, if the stored energy capacity is much
larger than the net load peaks, the upper bound on stored energy may be slack
and the optimal policy may never utilize the upper levels of the storage.} Let
$\mathcal S^\pi\subseteq\mathcal S$ denote the corresponding recurrent class.
The induced chain is ergodic on its recurrent class
$\mathcal S^\pi$, and therefore admits a unique stationary distribution
$\rho$ supported on $\mathcal S^\pi$.
\end{proof}

The stationary distribution and accompanying policy can be computed via dynamic
programming methods (\textit{e.g.}, value iteration), but our work's overall analysis
focuses on the long-run equilibrium behavior of the system. Hence, we opt to
directly find $\rho$ and $\pi$ using the occupancy measure formulation.

\subsubsection{Occupancy Measure LP Formulation} \label{sec:occupancy-lp}

To compute $\rho$, we can utilize the occupancy-measure
formulation \cite{manneLinearProgrammingSequential1960}
which seeks $\rho$ as part of the solutions to a linear
program constructed around the \emph{occupancy measure} that minimizes the
expected average cost.
The occupancy measure $q(S,a)$ represents the long-run fraction of time the
system spends in each state-action pair; the occupancy which minimizes
system operating costs can be found with the following linear program (LP):

\begin{subequations} \label{eq:occupancy_lp}
\begin{align}
    \min_{q}\quad \label{eq:occupancy_lp_obj} &
    \sum_{S}\sum_{a\in\mathcal{A}(S)} q(S,a)\,C(\ell,a) \\
    \text{s.t.}\quad &
    q(S,a)\ge 0, \quad \forall S\in\mathcal{S},\;\forall a\in\mathcal{A}(S), \label{eq:occupancy_nonneg} \\
    & \sum_{S} \sum_{a} q(S,a)=1, \label{eq:total_probability}\\
    & \sum_{a} q(S, a) = \sum_{S'} \sum_{a'} q(S', a')\,P(S \mid S', a')
    \quad \forall S\in\mathcal{S}. \label{eq:flow_conservation}
\end{align}
\end{subequations}

Here, the objective~\eqref{eq:occupancy_lp_obj}
minimizes the expected average system cost per
time step. The first two constraints~\eqref{eq:occupancy_nonneg} and
\eqref{eq:total_probability} ensure that $q$ is a valid probability
distribution and the last constraint~\eqref{eq:flow_conservation} enforces flow
conservation for each state $S$
such that the probability of entering state
$S$ in any interval equals the probability of leaving it.

The optimal $q^\star$ induces a distribution of actions:
\begin{equation}
    \mathbb{P}[a \mid S]
    =
    \frac{q^\star(S,a)}{\sum_{a'} q^\star(S,a')}, \quad a' \in \mathcal{A}(S)
\end{equation}
\noindent on its support. Because the stored energy transition is
deterministic conditional on the action (\textit{i.e.} the next state of charge is
uniquely determined by $(S,a)$), for each state $S$ all mass is assigned to a
single action and the action taken in state $S$ is deterministic; thus the
optimal policy $\pi^*$ is also deterministic.

By marginalization, the state occupancy satisfies:
\begin{equation}
    \rho^{\star}(S) = \sum_a q^\star(S,a),
\end{equation}
thus recovering the stationary distribution.

Solving~\eqref{eq:occupancy_lp_obj} as a LP involves discretizing the load,
stored energy, and action spaces. In particular, the discretization step of
$p_s^{ch}$ and $p_s^{dis}$ must match the discretization step of stored energy
levels so that conservation of energy is maintained.

\subsection{Capacity Expansion Model under Dynamic Storage Operation}
\label{sec:capacity_expansion}

Storage operation and generation capacity choices are mutually dependent: the
optimal storage policy depends on the cost spread of generators, while optimal
generation capacities depend on the arbitrage provided by storage.
Building on~\cite{geskeOptimalStorageInvestment2020}, we address this interaction
via an iterative fixed-point algorithm that alternates between determining
storage operation and adjusting generation capacities. We additionally extend the
aforementioned approach by introducing a capacity update rule for storage based
on a long-run zero-profit condition.

Let $\mathbf{K}$ denote the vector of installed capacities, consisting of
generation capacities $\mathbf{K}_{\mathcal G}$ and storage power capacity
$K_s$. Storage is assumed to have a fixed duration $\tau$, so that its
energy capacity satisfies $E = \tau K_s$.
For any fixed capacity vector $\mathbf{K}$, the storage operating problem
(Section~\ref{sec:occupancy-lp}) admits an optimal policy
$\pi^\star$ with associated occupancy measure $q^{\mathbf K}(S,a)$ and
induced stationary state distribution $\rho^{\mathbf K}(S)$.

Finally, let $G(S,a)$ denote the \textit{post-storage net demand} which
must be balanced by conventional generation, curtailment, and load shedding:

\begin{equation} \label{eq:post-storage-demand}
    G(S,a) = \ell + \frac{1}{\eta} p_s^{ch} - p_s^{dis}.
\end{equation}

\paragraph{Generation Capacity Update Under Fixed Storage Policy}

Given $\mathbf K$, the stationary distribution of the post-storage net demand
$G = G(S,a)$ is given by:

\begin{equation} \label{eq:pdf-post-storage-demand}
\mathbb{P}\left[ G = x \right]
=
\sum_{S \in \mathcal S}
\sum_{a \in \mathcal A(S)}
q^{\mathbf K}(S,a)\,
\mathbf 1\!\left\{ G(S,a) = x \right\}.
\end{equation}
Let $F_G(x) = \mathbb{P}\{ G \le x \}$ denote the corresponding cumulative
distribution function with an inverse function $F_G^{-1}(u) = \inf\{x :
F_G(x) \ge u\}$. The associated post-storage load-duration curve is:

\begin{equation} \label{eq:post-storage-duration-curve}
D_G(h) = F_G^{-1}(1-h), \qquad h \in [0,1].
\end{equation}
This curve is the direct analogy to the canonical load-duration curve, with
post-storage net demand replacing net load. As a result, standard
screening-curve methods can be applied to $D_G$ to update
$\mathbf{K}_{\mathcal{G}}$ based on each technology's corresponding fixed costs
$I_g$.

\paragraph{Storage Capacity Update With Zero-Profit Condition}

We update storage capacity using a long-run zero-profit condition computed
under the stationary distribution induced by optimal operation.
For each state--action pair $(S,a)$ with positive occupancy, define the
\emph{statewise marginal price}:
\begin{equation}
\lambda(S,a)
\;\triangleq\;
\max\!\Big\{
\max_{g:\,p_g(S,a)>0} c_g,\;
c_{\mathrm{shed}}\cdot\mathbf{1}\{p_{\mathrm{shed}}(S,a)>0\}
\Big\},
\end{equation}
\textit{i.e.}, the variable cost of the most expensive generator dispatched in that
period, or VoLL if load shedding occurs. The corresponding storage operating
surplus in state $S$ is:

\begin{equation}
\phi_s(S,a) =
    \lambda(S,a) \left(p_s^{{dis}} - \frac{1}{\eta} p_s^{{ch}}\right).
\end{equation}

The expected annual operating surplus of storage over the planning horizon is:
\begin{equation}
\Phi_s(\mathbf{K})
=
T
\sum_{S\in\mathcal{S}}\sum_{a\in\mathcal{A}(S)}
q^{\mathbf{K}}(S,a)\,\phi_s(S,a),
\end{equation}
where $T$ is the number of operating intervals per year.

Let $I_s^\tau$ denote the annualized fixed cost of a storage
technology with duration $\tau$. Storage capacity is updated via a proportional
stepsize rule until surplus and investment break even. Denote
$\mathbf{K}_\mathcal{G}^{(m+1)}$ as the optimal generation capacities
from the generation capacity update procedure based on the fixed
storage capacity $K_s^{(m)}$. The storage update is:

\begin{equation}
K_s^{(m+1)}
=
\Big[ K_s^{(m)} +
    \gamma_s \big(\Phi_s(\mathbf{K}_\mathcal{G}^{(m+1)}, K_s^{(m)}) -
    I_s^\tau K_s^{(m)} \big) \Big]_+,
\label{eq:storage_stepsize_update}
\end{equation}
where $\gamma_s>0$ is a stepsize parameter and $[\cdot]_+$ enforces
nonnegativity. At convergence,
\(
\Phi_s(\mathbf{K}^\star)
=
I_s^\tau K_s^\star.
\)

\paragraph{Fixed-Point Capacity Expansion Algorithm}
For a fixed storage duration $\tau$, the equilibrium $\mathbf{K}$ is
determined by the following fixed-point algorithm:

\begin{samepage}
\begin{enumerate}
\item Initialize storage and generation capacities $\mathbf{K}^{(0)}$.
\item For outer iteration $m=0,1,2,\ldots$:
\begin{enumerate}
    \item \emph{Generation update:} Hold storage capacity $K_s^{(m)}$ fixed
        and update $\mathbf{K}_{\mathcal{G}}^{(m)}$ to new capacities
        $\mathbf{K}_{\mathcal{G}}^{(m + 1)}$ by alternating between
        the operating LP and screening curve until a fixed point between
        storage operation and generation capacities is reached.
    \item \emph{Storage update:} compute
    $\Phi_s(\mathbf{K}_\mathcal{G}^{(m+1)}, K_s^{(m)})$ and update
    $K_s^{(m+1)}$ using~\eqref{eq:storage_stepsize_update},
    $E^{(m+1)}=\tau K_s^{(m+1)}$.
\end{enumerate}
\end{enumerate}
\end{samepage}

The algorithm terminates when both storage and generation capacities converge
within a prescribed tolerance $\varepsilon$:
\[
\lvert K_s^{(m+1)} - K_s^{(m)} \rvert \le \varepsilon_s
\quad \text{and} \quad
\bigl\|
\mathbf{K}_{\mathcal G}^{(m+1)} -
\mathbf{K}_{\mathcal G}^{(m)}
\bigr\| \le \varepsilon_{\mathcal G}.
\]

\section{Storage Operation and Investment}
\label{sec:operation_and_investment}

In this section, we study the operation and investment decisions
in a stylized electricity system with uncertain net demand, two generation
technologies, and a single storage technology. The analysis is intended to
highlight the general implications of dynamic storage utilization for long-run
planning, rather than to represent any specific real-world system.

\subsection{Example Electricity System}

We construct our system based on a hypothetical future of a
high-renewable, highly electrified grid where there is significant net demand
uncertainty and frequent periods of negative net demand. To this end,
we utilize an Ornstein-Uhlenbeck process (Section \ref{sec:model-ou})
whose drift and volatility are calibrated to 2024 net demand data from
the California ISO (CAISO) balancing area. We exogenously set the long-term mean $\mu$
of the process to 6.5~GW, which creates a net demand
that is negative approximately 15\% of the time.
\footnote{See Table~\ref{tab:demand_parameters} in the Appendix for further discussion on the
          parameters.}

The system planner may invest in three technologies: a baseload generator, a
peaking generator, and an energy storage technology with a fixed four-hour
duration. The operating problem is solved on an hourly basis.
Numerical values for the demand process and technology cost
parameters are provided in Appendix~\ref{appendix:system-values}.

\subsection{Operating Policy and Investment Decisions under Demand Uncertainty}

We first examine the operating policy of storage under demand uncertainty
when the system capacities are fixed.
Figure \ref{fig:load_duration_curve} shows the canonical (pre-storage) net
load-duration curve
of the system along with two post-storage load-duration curves.
The post-storage curve represents the distribution of net demand which must
be met by conventional generation or load shedding after the storage action is
taken. Charging shifts the post-storage curve above the pre-storage curve
and discharging vice-versa.

Under dynamic operation, the storage unit primarily
charges during two regions: i) when net demand is below 10~GW and baseload
generation is marginal, and ii) when net demand is within 15--23~GW and
peaker generation is marginal. The second condition corresponds
to a ``precautionary storage'' operation whereby demand uncertainty induces
banking energy for the \textit{potential} of a period of
high demand.

In contrast, storage under perfect-foresight exhibits the same operating
behavior in the first region, but not in the second region. With full
information, it is possible to almost exclusively charge during baseload and
curtailment conditions to meet peaks where net demand exceeds peaker capacity,
hence the post-storage curve overlaps with the pre-storage in the 15--23~GW
region.

\begin{figure}[tb!]
    \centering
    \includegraphics[width=\columnwidth]{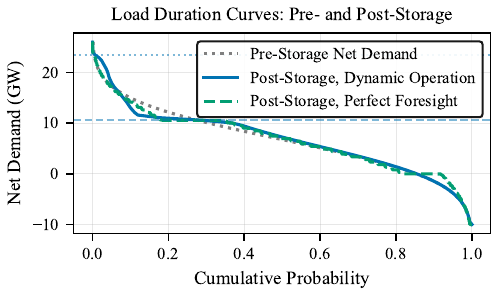}
    \caption{Load duration curves for i) net demand, ii) post-storage demand
             under dynamic (imperfect-foresight) operation, and iii)
             post-storage demand under perfect-foresight operation.
             Post-storage demand is higher than net demand when storage is
             charging, see \eqref{eq:post-storage-demand} for definition.
             System baseload and peaker capacity steps are denoted with
             horizontal dashed and dotted lines, respectively. Storage capacity
             is 4.0~GW\@.}
    \label{fig:load_duration_curve}
\end{figure}

Overall, the ``precautionary storage'' behavior under uncertainty results in a
more conservative operating policy that reserves a portion of energy for future
scarcity events rather than fully exploiting arbitrage opportunities.
Figure \ref{fig:soc_utilization} demonstrates this outcome:
the average stored energy levels at the beginning of each operating interval
(conditioned on the net demand) are everywhere higher in the dynamic operating
case than the perfect-foresight case. In the latter case, the operator
can more economically utilize stored energy by precisely timing discharges to
meet high-demand events.

Demand uncertainty induces a more conservative storage operating policy, which
substantially reduces the system value of storage.
At equilibrium, the perfect-foresight capacity mix includes 7.2~GW
of storage, whereas under demand uncertainty the mix only includes
4.0~GW (Table~\ref{tab:price_cap_capacity}).

The inability of storage to fully arbitrage scarcity
events under uncertainty shifts investment toward peaker generation
which has a greater availability to meet demand: peaker capacity
increases from 10.7~GW in the perfect-foresight case to
12.9~GW under uncertainty, and baseload capacity decreases from 11.1~GW to
10.6~GW\@.
Demand uncertainty biases the equilibrium capacity mix away from storage and
towards a greater total amount of conventional generation, whose availability
is less contingent on future state information.

\begin{figure}[tb!]
    \centering
    \includegraphics[width=\columnwidth]{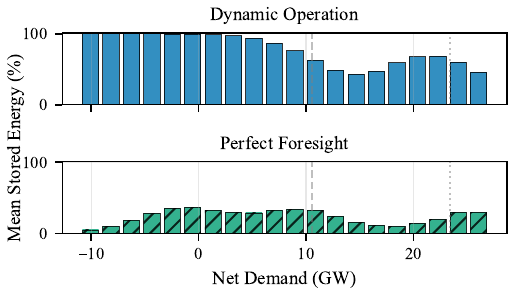}
    \caption{Expected level of stored energy conditioned on net demand state.
             Values for the perfect-foresight panel are constructed
             by sampling 10 years of hourly data from the net load process,
             computing the optimal dispatch, and then binning the state of
             charge by net demand level. Baseload and peaker capacity steps
             are denoted with vertical dashed and dotted lines, respectively.
             Capacities correspond to the first row of Table~\ref{tab:price_cap_capacity}.}
    \label{fig:soc_utilization}
\end{figure}

{\renewcommand{\arraystretch}{1.1}
\begin{table}[tb!]
\centering
\begin{threeparttable}
\caption{Capacity Investments (GW) Under Varying Price Caps}
\label{tab:price_cap_capacity}
\begin{tabular*}{\columnwidth}{@{\extracolsep{\fill}}lcccc@{}}
\toprule
\textbf{Technology} & \textbf{VoLL (\$/MWh)} &
\multicolumn{3}{c}{\textbf{Price Cap (\$/MWh)}} \\
\cmidrule(lr){2-2}\cmidrule(lr){3-5}
& 10{,}000 & 5{,}000 & 2{,}500 & 1{,}000 \\
\midrule
\multicolumn{5}{@{}l}{\textit{Dynamic Operation}} \\
Storage  & 4.0  & 3.9  & 4.0  & 3.9 \\
Baseload & 10.6 & 11.1 & 10.6 & 11.1 \\
Peaker   & 12.9 & 11.3 & 10.8 & 9.3 \\
\addlinespace[0.2em]
\textbf{Total} & \textbf{27.5} & \textbf{26.3} & \textbf{25.4} & \textbf{24.3} \\
\addlinespace[0.5em]
\multicolumn{5}{@{}l}{\textit{Perfect-Foresight Operation}} \\
Storage  & 7.2 & 6.8 & 6.6 & 6.2 \\
Baseload & 11.1 & 11.2 & 11.2 & 11.2 \\
Peaker   & 10.7  & 10.2 & 8.8  & 7.0 \\
\addlinespace[0.2em]
\textbf{Total} & \textbf{29.0} & \textbf{28.2} & \textbf{24.6} & \textbf{24.4} \\
\bottomrule
\end{tabular*}
\begin{tablenotes}[flushleft]
\footnotesize
\item Capacities for \textit{Dynamic Operation} are determined based on Section
~\ref{sec:capacity_expansion}, while capacities for \textit{Perfect-Foresight
    Operation} are determined by solving a deterministic capacity expansion over
    a 10-year hourly sample drawn from the demand transition matrix.
\end{tablenotes}
\end{threeparttable}
\end{table}
}

\section{Capacity Investments and the ``Missing Money'' Problem Unique to Storage}
\label{sec:missing_money}

\subsection{Missing Money Impacts on Capacity and Operation}

Grid reliability is a non-excludable public good: during rotating blackouts,
service interruptions cannot be targeted to individual customers and are
instead imposed broadly or along shared feeders. This reliability externality,
in conjunction with the regulatory price caps present in many markets, results
in an effective cap on short-run electricity prices below the social value of
lost load (VoLL). As a result, there is a disincentive for markets to invest in
the socially optimal level of capacity.

These distortions give rise to a ``missing money'' problem, whereby energy
market revenues alone cannot cover the fixed costs of resources needed to
maintain adequate reliability. Table~\ref{tab:price_cap_capacity} illustrates
the impact of this phenomenon on capacity investments by way of price cap which
truncates prices below the VoLL of \$10k/MWh. In both the dynamic and
perfect-foresight cases, peaker investment decreases in the presence of a price
cap, since the cap reduces revenues during scarcity events.

Storage investment decreases slightly in the dynamic setting over the modeled
range, from 4.0 GW at full VoLL to 3.9 GW at a \$1k/MWh cap. This pattern
should be interpreted with caution. The dynamic investment problem is
nonconvex, and the fixed-point algorithm used here does not guarantee
convergence to a global optimum. While the qualitative interactions between
uncertainty, operating policy, and investment incentives are robust, the
direction and magnitude of storage investment responses to price caps may
change under alternative solution methods or improved global search.

\begin{figure}[tb!]
    \centering
    \includegraphics[width=\columnwidth]{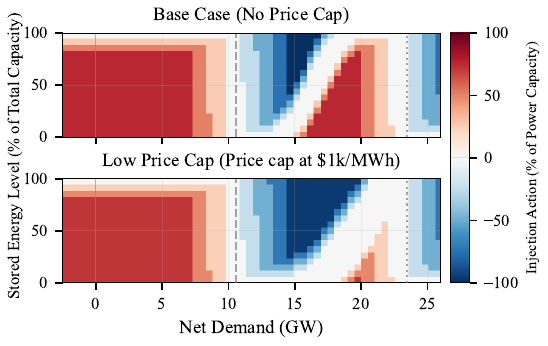}
    \caption{Impact of price cap on dynamic storage operation.
             Resource
             capacities are held constant at the VoLL capacities for
             \textit{Dynamic Operation} (first column, upper panel in
             Table~\ref{tab:price_cap_capacity}). Red = optimal to charge,
             blue = optimal to discharge, white = optimal to idle. System
             baseload and peaker steps are denoted with vertical dashed and
             dotted lines, respectively.}
    \label{fig:policy_price_caps}
\end{figure}


In addition to impacts on capacity, the price cap also biases the operating
behavior of storage under uncertainty. The operating policy is shifted away
from charge and towards discharge as the cap is lowered.
Figure~\ref{fig:policy_price_caps} illustrates this effect by comparing the
optimal dynamic storage policy under VoLL pricing and under a binding price
cap. As the cap is lowered, the state space in which charging is
optimal contracts, while the region in which discharging is optimal  expands.

This distortion arises because the price cap reduces the continuation value of
stored energy and weakens the incentive to charge in anticipation of future
scarcity events. The distortion is similar to the problem identified in
\cite{suskiMissingMoneyMarketBased2025}, whereby a price cap in combination
with storage's need to break even on round-trip efficiency losses can,
under certain net load sequences, remove the incentive to store energy for
load shed events.
\footnote{For example, assume there is a high-price hour immediately followed
by a scarcity hour where the price cap is binding. If the first hour has
a price equal to \$500/MWh, and the storage has $\eta=25\%$, a
cap below \$2,000/MWh removes the incentive to charge in the first hour and
creates more load shedding.}
While \cite{suskiMissingMoneyMarketBased2025} focuses on this
phenomenon in a perfect-foresight setting, we identify here an additional
mechanism
which is not dependent on the round-trip efficiency and is unique
to the dynamic operation. As Table~\ref{tab:price_cap_reliability}
in the Appendix illustrates,
it is possible for the price cap to have no impact on unserved energy under
perfect-foresight operation while having a significant impact in the dynamic
operation by shifting the optimal policy towards earlier discharge in
moderately high-demand states (Fig. \ref{fig:policy_price_caps}).

Capacity payments are a common mechanism used to address this problem by
providing additional payments to resources based on an \textit{ex ante}
calculation of their firm capacity contribution. However, the fact that
storage's preferred operating states are affected by the missing money problem
implies that state-independent capacity payments alone are insufficient to
fully remedy underinvestment in storage capacity. While a fixed capacity
payment will incentivize additional storage investment by offsetting fixed
costs, it cannot address the reduction of the incentives for storage to
hold energy in reserve. Thus, without a state-contingent contract
structure that compensates reserving energy for possible scarcity
events, the distortion to storage operation will create lower revenues
and lead to less investment than would be socially optimal.

\section{Conclusions} \label{sec:conclusion}

This paper examined how demand uncertainty alters the operation and investment
incentives of energy storage in electricity systems. By modeling storage
operation as an infinite-horizon stochastic control problem and embedding the
resulting stationary policies into a capacity expansion framework, we showed
that uncertainty induces a precautionary operating policy where storage
reserves energy to hedge against potential future scarcity events.
This behavior reshapes post-storage net demand
distributions and affects the equilibrium capacity mix
relative to a perfect-foresight benchmark, with storage becoming significantly
less attractive under uncertainty.

A key limitation of our work is that the capacity outcomes are obtained using
a fixed-point algorithm that cycles between dynamic storage operation and
investment, but the algorithm does not guarantee convergence to a globally
optimal solution in the presence of nonconvexities. However, the qualitative
mechanisms identified in this work arise from the structure of the underlying
stochastic control problem and as such are robust to any numerical
approximations or optimality gaps. Developing algorithmic approaches with
stronger optimality guarantees remains an important direction for future
research.

Finally, the reliability externality inherent to electricity markets interacts
with demand uncertainty to uniquely distort storage outcomes when compared to
conventional generators. Price caps and the ``missing money'' problem affect
not only the revenues during scarcity events, but also the structure of the optimal
operating policy by reducing the continuation value of holding energy for
uncertain future scarcity states. These operational distortions impact long-run
planning outcomes, leading to underinvestment in storage capacity that cannot
be corrected by uniform, state-independent capacity payments alone. Our work
highlights the need for capacity contracting mechanisms for resource adequacy
that account for the state-contingent value of stored energy in systems with
high uncertainty and a growing reliance on storage.

\section{AI Usage Disclosure}

ChatGPT was used to assist with grammar, phrasing, and formatting in all
sections of this paper. Code was written with
GitHub Copilot and Claude Code. All AI-generated content was reviewed and
edited for accuracy and clarity.

\appendix
\raggedbottom
\subsection{System Parameters}
\label{appendix:system-values}

{\renewcommand{\arraystretch}{1.1}
\begin{table}[H]
\centering
\begin{threeparttable}
\caption{Stochastic Process Parameters and Demand Discretization}
\label{tab:demand_parameters}
\begin{tabular*}{\columnwidth}{@{\extracolsep{\fill}}lcc@{}}
\toprule
\textbf{Category} & \textbf{Parameter} & \textbf{Value} \\
\midrule
\multirow{3}{*}{OU Process\tnote{*}}
& Long-run mean $\mu$ (GW) & $6.5$ \\
& Volatility $\sigma$ (GW) & $2.29$ \\
& Drift speed $\theta$ ($\text{h}^{-1}$) & $0.0637$ \\
\addlinespace[0.5em]
\multirow{3}{*}{Demand Discretization}
& Minimum load (GW) & $-10$ \\
& Maximum load (GW) & $26$ \\
& Discrete demand states & $71$ \\
\bottomrule
\end{tabular*}
\begin{tablenotes}[flushleft]
\footnotesize
\item[*] The volatility and drift match the autocorrelation and standard
         deviation of the 2024 CAISO net load data; the resulting OU process
         has a standard deviation of 6.4~GW. Independently of the CAISO data,
         the process mean $\mu$ is set to 6.5~GW\@. This places
         a zero-value of net demand approximately one standard deviation below the mean
         such that the process includes a significant amount
         of negative net-demand states.
\end{tablenotes}
\end{threeparttable}
\end{table}
}

{\renewcommand{\arraystretch}{1.1}
\begin{table}[H]
\centering
\begin{threeparttable}
\caption{Generation and Storage Cost Assumptions}
\label{tab:tech_costs}
\begin{tabular*}{\columnwidth}{@{\extracolsep{\fill}}lccc@{}}
\toprule
\textbf{Technology}
& \textbf{Fixed Cost\tnote{*}}
& \textbf{Variable Cost}
& \textbf{Efficiency} \\
& (\$/MW-yr)
& (\$/MWh)
& \\
\midrule
Baseload
& 400{,}000
& 12
& -- \\
Peaker
& 110{,}000
& 180\tnote{\textdagger}
& -- \\
\midrule
Battery Storage (4h)
& 65{,}000
& --
& 0.85 \\
\bottomrule
\end{tabular*}
\begin{tablenotes}[flushleft]
\footnotesize
\item[*] Fixed cost values are loosely based on the NREL Annual Technology
      Baseline for 2050.
\item[\textdagger] The variable cost of the peaker is deliberately set at a
      relatively high level to induce storage investment in our examples.
\end{tablenotes}
\end{threeparttable}
\end{table}
}

\subsection{Reliability Statistics Comparison}\label{appendix:reliability}

{\renewcommand{\arraystretch}{1.1}
\begin{table}[H]
\centering
\begin{threeparttable}
\caption{Reliability Indices Under Varying Price Caps\tnote{*}}
\label{tab:price_cap_reliability}
\begin{tabular*}{\columnwidth}{@{\extracolsep{\fill}}lcccc@{}}
\toprule
\textbf{Reliability Index \tnote{\textdagger}} & \textbf{VoLL (\$/MWh)} &
\multicolumn{3}{c}{\textbf{Price Cap (\$/MWh)}} \\
\cmidrule(lr){2-2}\cmidrule(lr){3-5}
& 10{,}000 & 5{,}000 & 2{,}500 & 1{,}000 \\
\midrule
\multicolumn{5}{@{}l}{\textit{Dynamic Operation}} \\
LOLE \tnote{\textdaggerdbl} & 33.3 & 36.0 & 34.3 & 33.8 \\
EUE  & 39.4 & 43.9 & 49.2 & 58.7 \\
\addlinespace[0.35em]
\multicolumn{5}{@{}l}{\textit{Perfect-Foresight Operation}} \\
LOLE & 11 & 11 & 11 & 11 \\
EUE & 28.5 & 28.5 & 28.5 & 28.5 \\
\bottomrule
\end{tabular*}

\begin{tablenotes}
\footnotesize
\item[*] All operations are determined with the \textit{Dynamic Operation} + VoLL capacities in Table~\ref{tab:price_cap_capacity}.
      Reliability indices are evaluated over a 10-year
      sample from the OU process at hourly resolution.
\item[\textdagger] Loss of load expectation (LOLE) is in hours / year;
    expected unserved energy (EUE) is in GWh / year.
\item[\textdaggerdbl] LOLE for the \textit{Dynamic Operation} case does not
    change monotonically; we hypothesize that this is due to degeneracy
    in how load can be shed, \textit{e.g.} shedding 1 MW of load for
    1 hour has an equivalent system cost as shedding 0.5 MW of load for 2 hours.
\end{tablenotes}
\end{threeparttable}
\end{table}
}

\vspace{10em}
\bibliographystyle{ieeetr}
\bibliography{references}

\endgroup
\end{document}